\documentclass{article}

\usepackage{amsmath}
\usepackage{amssymb}
\usepackage{amsthm}
\usepackage{mathtools} 
\usepackage{bm}    
\usepackage{color}
\usepackage{enumitem}  
\newtheorem{theorem}{Theorem}
\newtheorem{proposition}{Proposition}

\newtheorem{lemma}{Lemma}
\usepackage{cite} 

\usepackage{spconf}
\usepackage{graphicx}
\usepackage{hyperref}

\title{Stable  Filters for Generative Modeling of Graph Signals}

\name{Martin Schmidt and Gonzalo Mateos\thanks{Work supported in part by the NSF under award ECCS 2231036.}}
\address{University of Rochester, Rochester, NY, USA}

\begin{document}
\ninept

\maketitle

\begin{abstract}
Generating signals on graphs requires permutation-equivariant models that exhibit stability with respect to relative structural perturbations. While recent graph-aware Schrödinger bridge models incorporate topology information directly into their reference dynamics, it is unclear how perturbations of the graph propagate through these dynamics and affect the resulting generated distributions. In this paper, we analyze the structural stability of graph-aware continuous-time generative models whose drift combines a graph filter with a learned graph neural network. We derive explicit Wasserstein stability bounds that quantify the effect of relative graph perturbations on the generated distributions. Motivated by these bounds, we introduce a principled framework for designing stable graph filters that preserve the smoothing behavior of graph heat diffusion, while boosting structural stability. Experiments on synthetic and fMRI signals show our stable filters enhance structural robustness while matching or exceeding the generative quality of the heat equation baseline.
\end{abstract}

\begin{keywords}
Graph signal processing, generative models, stability, permutation equivariance, graph neural networks.
\end{keywords}

\section{Introduction}

Mapping one probability distribution to another is a fundamental problem in generative modeling. While these models have seen remarkable success in Euclidean spaces, there is a growing need to transport distributions on non-Euclidean settings. In particular, graphs provide a natural representation for these irregular domains, where vertex-supported signals can describe, e.g., neural activity over brain connectomes or traffic flows on transportation networks~\cite{huang2018graph,schmidt2025connectome,yu2017spatio}. Learning to faithfully generate graph signals has broad applicability, ranging from synthetic data augmentation and implicit priors to realistic system simulations across domains like recommender systems, financial forecasting, and wireless networks \cite{zhu2024graphsignaldiffusionmodel, uslu2026generative}.

Among the many stochastic processes that match a given pair of endpoint distributions, the dynamic Schr\"odinger bridge problem provides a principled solution that favors the process closest to a prescribed reference dynamics~\cite{de2021diffusion, chen2022likelihood, liu2024generalized}. In the classical formulation, the reference is typically chosen as a standard Wiener process. However, when transporting \emph{distributions of graph signals}, it is prudent to incorporate the underlying structure directly into the generative model by replacing the Wiener process with graph-aware reference dynamics~\cite{yang2025topological, wyrwal2026topological, rozada2025graphawarediffusionsignalgeneration}. This promotes trajectories that are consistent with the graph topology, and the corresponding dynamics can be learned from samples using bridge and flow matching techniques~\cite{shi2023diffusion, lipman2022flow}.

Crucially and by design, these generative models explicitly depend on the graph that parameterizes both the reference dynamics and the learned vector field. In practice, the observed graph is only a partial reflection of an underlying complex system~\cite[Ch. 7]{kolaczyk2009book}, typically subject to missing edges or estimation errors, leading to an imperfect graph representation~\cite{SI_SPMAG, dong_2019_learning}. Consequently, unavoidable perturbations of the graph can steer the entire generative trajectory and, ultimately, the distribution of the generated signals.

Recent efforts have started to examine the robustness properties of continuous-time generative models on graphs. Specifically, prior work established the structural stability of generative processes governed solely by a learned graph neural network (GNN) vector field~\cite{schmidt_2026}. However, the analysis therein does not extend to frameworks that involve graph-aware reference dynamics, which introduce an additional graph-dependent drift term governed by a graph filter. Since this filter is a fixed design choice (e.g., graph heat diffusion) rather than a learned component, its inclusion not only alters the stability of the overall dynamics, but also raises the critical question of how to design stable graph filters for these generative models.

Our contributions address this foundational gap. First, in Section \ref{sec:stability_properties} we characterize the structural stability of graph-aware continuous-time models by deriving bounds on the Wasserstein distance, quantifying the error between the generated distributions caused by graph perturbations. Leveraging these theoretical guarantees, in Section \ref{sec:stable_filters} we introduce a principled optimization framework for reference graph filter design. Specifically, our approach synthesizes filters that respect the induced smoothing of heat diffusion on graphs without sacrificing robustness to structural perturbations. Reproducible tests on synthetic and fMRI signals demonstrate that our optimal filter design exhibits enhanced stability to graph perturbations, while matching or exceeding the generative quality of the heat equation baseline (Section \ref{sec:experiments}). Concluding remarks are given in Section \ref{sec:conclusion}.

\section{Preliminaries}

\subsection{Graph Signal Processing}

Let $\mathcal{G}= (\mathcal{V}, \mathcal{E}, \mathcal{W})$ be a known undirected graph with $N$ nodes, edge set $\mathcal{E}\subseteq \mathcal{V}\times \mathcal{V}$, and edge weight map $\mathcal{W}: \mathcal{E}\to \mathbb{R}$. Its structure is represented by a graph shift operator (GSO) $\mathbf{L}\in\mathbb{R}^{N\times N}$, such as the graph Laplacian. A graph signal $x:\mathcal{V}\to\mathbb{R}$ is represented by $\mathbf{x}\in\mathbb{R}^N$, with $x_i$ denoting its value at node $i$. Assuming the eigendecomposition $\mathbf{L}=\mathbf{V}\mathbf{\Lambda}\mathbf{V}^\top$, with eigenvalues $\{\lambda_i\}_{i=1}^N$, the graph Fourier transform is $\hat{\mathbf{x}}=\mathbf{V}^\top\mathbf{x}$ \cite{ortega2018graph}. A graph convolutional filter of order $P$ is defined as $\mathbf{F}(\mathbf{L})\coloneqq\sum_{p=0}^{P}\theta_p\mathbf{L}^p$, with coefficients $\{\theta_p\}$~\cite{sandryhaila2014tsp,isufi2024tsp}, and corresponding frequency response $f(\lambda)\coloneqq\sum_{p=0}^{P}\theta_p\lambda^p$. A GNN extends this construction by cascading graph filters with pointwise, normalized Lipschitz nonlinearities $\sigma(\cdot)$ \cite{Gama_2019, kipf2016semi, defferrard2016convolutional}. For an input $\mathbf{x}$, we denote the output of a GNN with learnable parameters $\boldsymbol{\theta}$ and fixed graph support $\mathbf{L}$ by $u_{\boldsymbol{\theta}}(\mathbf{x};\mathbf{L})$.

\subsection{Stability of Graph Filters and GNNs}
\label{sec: stability-GNN}

Let $\mathcal{P}$ denote the set of $N \times N$ permutation matrices. A fundamental property of both graph filters and GNNs is \emph{permutation equivariance}~\cite{Gama_2020}. For any permutation $\mathbf{P} \in \mathcal{P}$, applying the operator to a permuted signal $\mathbf{P}^\top \mathbf{x}$ on a similarly permuted graph $\mathbf{P}^\top \mathbf{L} \mathbf{P}$ yields a permuted output, meaning $u_{\boldsymbol{\theta}}(\mathbf{P}^\top \mathbf{x}; \mathbf{P}^\top \mathbf{L} \mathbf{P}) = \mathbf{P}^\top u_{\boldsymbol{\theta}}(\mathbf{x}; \mathbf{L})$. 

In practice, the observed graph structure is often noisy. To analyze the robustness of graph filters and GNNs to structural errors, we adopt a well-documented relative perturbation model~\cite{Gama_2020, luana_2021}. Given a nominal GSO $\mathbf{L}$ and a perturbed one $\tilde{\mathbf{L}}$, we evaluate the symmetric relative error matrix $\mathbf{E} \in \mathcal{S}:=\{\mathbf{S}\in\mathbb{R}^{N\times N}: \mathbf{S}=\mathbf{S}^\top\}$ at the permutation $\mathbf{P}_0 \in \mathcal{P}$ that minimizes the error's spectral norm, i.e.,
\begin{equation}
\label{eq: perturbation-model}
\begin{aligned}
\{\mathbf{E}^\star, \mathbf{P}_0\} &= \arg\min_{\mathbf{E}\in\mathcal{S}, \mathbf{P}\in\mathcal{P}} \|\mathbf{E}\|_2 \\
\text{s.t.} \quad \mathbf{P}^{\top}\tilde{\mathbf{L}}\mathbf{P} &= \mathbf{L} + \frac{1}{2}(\mathbf{E}\mathbf{L}+\mathbf{L}\mathbf{E}).
\end{aligned}
\end{equation}
Under this perturbation model, both graph filters and GNNs are \emph{structurally stable}. Specifically, for a relative perturbation bounded by $\|\mathbf{E}^\star\|_2 \leq \varepsilon$, the pointwise deviation of the output is bounded by
\begin{equation*}
\| \mathbf{P}_0^\top u_{\boldsymbol{\theta}}(\mathbf{x}; \tilde{\mathbf{L}}) - u_{\boldsymbol{\theta}}(\mathbf{P}_0^\top \mathbf{x}; \mathbf{L}) \| \leq \left(\Gamma \varepsilon + \mathcal{O}(\varepsilon^2)\right) \|\mathbf{x}\|,
\end{equation*}
for all $\mathbf{x} \in \mathbb{R}^N$. The structural stability constant $\Gamma > 0$ dictates the sensitivity of the architecture to structural noise. For a graph filter, $\Gamma$ depends primarily on the integral Lipschitz constant of its frequency response; see~\cite[Thm. 2]{Gama_2020}. For a GNN, however, $\Gamma$ is strictly larger as it compounds with the network's depth, width, and the uniform spectral bounds of its intermediate filters~\cite[Thm. 4]{Gama_2020}. We build on these foundations to analyze the propagation of structural errors through the continuous-time dynamics of our generative model.

\subsection{Topological Schr\"odinger Bridge}

Let $p_0$ and $p_1$ be two probability distributions on $\mathbb{R}^N$, and let $\mathfrak{M}$ denote the set of probability measures on the path space $\mathcal{C}([0,1],\mathbb{R}^N)$. Given a reference path measure $\mathbb{Q}$, the Schr\"odinger bridge problem \cite{leonard2013survey} seeks (KL stands for Kullback-Leibler divergence)
\begin{equation}
\label{eq: sb}
    \mathbb{P}^{\star}
    \in
    \arg\min_{\mathbb{P}\in \mathfrak{M}}
    \mathrm{KL}(\mathbb{P}\|\mathbb{Q})
    \quad \text{s.t.} \quad
    \mathbb{P}_0=p_0,\ \mathbb{P}_1=p_1.
\end{equation}
Thus, the choice of $\mathbb{Q}$ determines the reference dynamics with respect to which the interpolation between $p_0$ and $p_1$ is optimized. 

For graph-supported data, this reference process can be chosen to explicitly incorporate the underlying topology \cite{yang2025topological}. Let $\mathbf{L}$ denote a GSO and consider the graph-aware reference diffusion
\begin{equation}
\label{eq: reference-SDE}
    d\mathbf{y}_t
    =
    \left[
        \mathbf{H}_t(\mathbf{L})\mathbf{y}_t
        +
        \boldsymbol{\alpha}_t
    \right]dt
    +
    \mathbf{\Sigma}_t\,d\mathbf{w}_t,
\end{equation}
where $\mathbf{w}_t$ is a standard Brownian motion in $\mathbb{R}^N$, $\mathbf{\Sigma}_t\in\mathbb{R}^{N\times N}$ controls the diffusion, $\boldsymbol{\alpha}_t \in \mathbb{R}^N$ is a bias term and $\mathbf{H}_t(\mathbf{L})\coloneqq \sum_{k=0}^{K} h_k(t)\mathbf{L}^k$ is a time-dependent graph filter of order $K$. Then $\mathbb{Q}$ is the path measure law induced by \eqref{eq: reference-SDE}. Substituting $\mathbb{Q}$ into \eqref{eq: sb} yields a Schr\"odinger bridge whose optimal trajectories are defined relative to the graph-aware dynamics. For general endpoint distributions, the drift of the resulting bridge is not available in closed form. Bridge and flow matching methods learn the required correction from samples by regressing vector fields associated with conditional reference bridges~\cite{yang2025topological, wyrwal2026topological}. The resulting dynamics can be written as
\begin{equation}
\label{eq: generative-SDE}
    d\mathbf{x}_t
    =
    b_t(\mathbf{x}_t;\mathbf{L})dt
    +
    \mathbf{\Sigma}_t\,d\mathbf{w}_t,
\end{equation}
where
\begin{equation}
\label{eq: generative-drift}
    b_t(\mathbf{x};\mathbf{L})
    \coloneqq
    \mathbf{H}_t(\mathbf{L})\mathbf{x}
    +
    \boldsymbol{\alpha}_t
    +
    u_t^{\boldsymbol{\theta}}(\mathbf{x};\mathbf{L}).
\end{equation}
Here, $u_t^{\boldsymbol{\theta}}(\cdot;\mathbf{L})$ denotes a learned vector field that corrects the reference drift so that the law induced by \eqref{eq: generative-SDE} solves the Schr\"odinger bridge problem \eqref{eq: sb}. Specifically, we let
$u_t^{\boldsymbol{\theta}}(\mathbf{x};\mathbf{L}) \coloneqq u_{\boldsymbol{\theta}}(g(\mathbf{x},t);\mathbf{L})$, where $u_{\boldsymbol{\theta}}(\cdot;\mathbf{L})$ is a GNN and $g(\mathbf{x},t)\in\mathbb{R}^N$ is a continuous, permutation-equivariant conditioning map that incorporates the temporal dependence; see \cite[Sec. III]{schmidt_2026} for a detailed discussion.

\section{Stability Properties}\label{sec:stability_properties}

\subsection{Problem Formulation}

We consider the graph-aware generative dynamics \eqref{eq: generative-SDE}--\eqref{eq: generative-drift} and study their stability to perturbations of the underlying graph $\mathcal{G}$. In particular, let $\mathbf{L}$ denote the nominal GSO and $\tilde{\mathbf{L}}$ be a perturbed GSO satisfying the relative perturbation model \eqref{eq: perturbation-model}. Let $\Phi_t(\mathbf{x}_0;\mathbf{L})$ denote the solution of \eqref{eq: generative-SDE} at time $t$ with initial condition $\mathbf{x}_0\sim p_0$. Because node indexing is arbitrary, we compare the nominal and perturbed dynamics modulo permutations. For a permutation $\mathbf{P}\in\mathcal{P}$, define
\begin{equation}
\label{eq: dynamics}
\Phi_t(\mathbf{P}^\top\mathbf{x}_0;\mathbf{L})
        \sim \pi_t^{\mathbf{P}},\quad
\mathbf{P}^\top\Phi_t(\mathbf{x}_0;\tilde{\mathbf{L}})\sim \tilde{\pi}_t^{\mathbf{P}}.
\end{equation}
Our goal is to characterize the Wasserstein distance between these distributions as a function of the structural perturbation
$\|\mathbf{E}^{\star}\|_2\leq\varepsilon$. 

\subsection{Permutation Equivariance}

To ensure the generative dynamics respect the graph's structural symmetries, the stochastic differential equation (SDE) in \eqref{eq: generative-SDE} must be permutation equivariant. The graph filter $\mathbf{H}_t(\mathbf{L})$ and the GNN vector field $u_t^{\boldsymbol{\theta}}(\cdot;\mathbf{L})$ are equivariant by design (cf. Section \ref{sec: stability-GNN}). We further assume that for any $\mathbf{P} \in \mathcal{P}$, the bias and diffusion satisfy $\mathbf{P}^\top\boldsymbol{\alpha}_t = \boldsymbol{\alpha}_t$ (e.g., a uniform drift $\boldsymbol{\alpha}_t = c \mathbf{1}$) and $\mathbf{P}^\top\mathbf{\Sigma}_t\mathbf{P} = \mathbf{\Sigma}_t$ (e.g., isotropic noise $\mathbf{\Sigma}_t = \sigma_t \mathbf{I}$) for all $t \in [0,1]$. Under these conditions, node relabeling commutes with the stochastic dynamics\footnote{Selected detailed proof arguments are omitted due to limited space.}.

\begin{proposition}
\label{prop: dynamics-equivariance}
For any initial distribution $\mathbf{x}_0 \sim p_0$, the probability laws of the generated trajectories of \eqref{eq: generative-SDE} satisfy 
\begin{equation*}
\mathbf{P}^\top\Phi_t(\mathbf{x}_0;\mathbf{L}) \overset{d}{=} \Phi_t(\mathbf{P}^\top\mathbf{x}_0;\mathbf{P}^\top\mathbf{L}\mathbf{P}),
\end{equation*}
for all $\mathbf{P} \in \mathcal{P}$ and $t \in [0,1]$, where $\smash{\overset{d}{=}}$ means equal in distribution.
\end{proposition}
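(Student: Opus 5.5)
The plan is to build a candidate solution of the permuted SDE directly from a solution of the nominal one, show that it solves the permuted equation driven by a (rotated) Brownian motion, and then conclude equality in law from uniqueness in law. Fix $\mathbf{P}\in\mathcal{P}$ and let $\mathbf{x}_t=\Phi_t(\mathbf{x}_0;\mathbf{L})$ solve \eqref{eq: generative-SDE} driven by $\mathbf{w}_t$. Define the candidate process $\mathbf{z}_t\coloneqq\mathbf{P}^\top\mathbf{x}_t$. Since $\mathbf{P}^\top$ is a constant linear map, no It\^o correction appears:
\begin{equation*}
d\mathbf{z}_t=\mathbf{P}^\top b_t(\mathbf{x}_t;\mathbf{L})\,dt+\mathbf{P}^\top\mathbf{\Sigma}_t\,d\mathbf{w}_t .
\end{equation*}

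The first key step is to show that the drift is equivariant, i.e.\ $\mathbf{P}^\top b_t(\mathbf{x};\mathbf{L})=b_t(\mathbf{P}^\top\mathbf{x};\mathbf{P}^\top\mathbf{L}\mathbf{P})$, and I would check each term of \eqref{eq: generative-drift} in turn.
\begin{itemize}
\item For the filter, $(\mathbf{P}^\top\mathbf{L}\mathbf{P})^k=\mathbf{P}^\top\mathbf{L}^k\mathbf{P}$ because $\mathbf{P}\mathbf{P}^\top=\mathbf{I}$. Hence $\mathbf{P}^\top\mathbf{H}_t(\mathbf{L})\mathbf{x}=\mathbf{H}_t(\mathbf{P}^\top\mathbf{L}\mathbf{P})\mathbf{P}^\top\mathbf{x}$.
\item For the bias, the standing assumption gives $\mathbf{P}^\top\boldsymbol{\alpha}_t=\boldsymbol{\alpha}_t$.
\item For the learned field, equivariance of $g$ gives $g(\mathbf{P}^\top\mathbf{x},t)=\mathbf{P}^\top g(\mathbf{x},t)$. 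Combining this with GNN equivariance from Section~\ref{sec: stability-GNN} yields $\mathbf{P}^\top u_{\boldsymbol{\theta}}(g(\mathbf{x},t);\mathbf{L})=u_{\boldsymbol{\theta}}(g(\mathbf{P}^\top\mathbf{x},t);\mathbf{P}^\top\mathbf{L}\mathbf{P})$.
\end{itemize}

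The second key step handles the noise term. Using $\mathbf{P}^\top\mathbf{\Sigma}_t\mathbf{P}=\mathbf{\Sigma}_t$, write $\mathbf{P}^\top\mathbf{\Sigma}_t\,d\mathbf{w}_t=\mathbf{\Sigma}_t\mathbf{P}^\top d\mathbf{w}_t=\mathbf{\Sigma}_t\,d\tilde{\mathbf{w}}_t$ with $\tilde{\mathbf{w}}_t\coloneqq\mathbf{P}^\top\mathbf{w}_t$. Because $\mathbf{P}$ is orthogonal, $\tilde{\mathbf{w}}_t$ is a continuous martingale with quadratic variation $t\mathbf{I}$. By L\'evy's characterization it is therefore a standard Brownian motion with respect to the same filtration, and it remains independent of $\mathbf{x}_0$, hence of $\mathbf{z}_0=\mathbf{P}^\top\mathbf{x}_0$. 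Putting the two steps together, $(\mathbf{z}_t,\tilde{\mathbf{w}}_t)$ is a (weak) solution of \eqref{eq: generative-SDE} with GSO $\mathbf{P}^\top\mathbf{L}\mathbf{P}$ and initial condition $\mathbf{P}^\top\mathbf{x}_0$.

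The main obstacle is the final step, passing from ``$\mathbf{z}_t$ is a solution'' to ``$\mathbf{z}_t\overset{d}{=}\Phi_t(\mathbf{P}^\top\mathbf{x}_0;\mathbf{P}^\top\mathbf{L}\mathbf{P})$'', which requires uniqueness in law of the permuted SDE. I would obtain this from pathwise uniqueness via Yamada--Watanabe.
\begin{itemize}
\item The drift is globally Lipschitz in $\mathbf{x}$: the filter is linear, $\sigma$ is Lipschitz, and $u_{\boldsymbol{\theta}}$ is a composition of bounded filters with Lipschitz nonlinearities.
\item Here I would additionally assume that $g(\cdot,t)$ is Lipschitz uniformly in $t$, and that $h_k(t)$, $\boldsymbol{\alpha}_t$ and $\mathbf{\Sigma}_t$ are bounded and measurable on $[0,1]$.
\item Under these conditions, strong existence and pathwise uniqueness hold, so the law of any solution is determined by the initial law and the coefficients.
\end{itemize}
It follows that $\mathbf{P}^\top\Phi_t(\mathbf{x}_0;\mathbf{L})=\mathbf{z}_t$ has the same law as $\Phi_t(\mathbf{P}^\top\mathbf{x}_0;\mathbf{P}^\top\mathbf{L}\mathbf{P})$ for every $t\in[0,1]$, and in fact as processes on path space.
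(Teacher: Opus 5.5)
Your proof is correct and follows the route the paper intends. The paper omits this proof for space, but its ingredients are exactly yours: termwise equivariance of $\mathbf{H}_t$, $\boldsymbol{\alpha}_t$ and $u_{\boldsymbol{\theta}}(g(\cdot,t);\mathbf{L})$, the identity $\mathbf{P}^\top\mathbf{\Sigma}_t\mathbf{P}=\mathbf{\Sigma}_t$, and a relabeled Brownian motion (the same device used in the proof sketch of Theorem~\ref{thm: wasserstein-stability}); your explicit uniqueness-in-law step, with the added Lipschitz and boundedness assumptions on $g$, $h_k(t)$, $\boldsymbol{\alpha}_t$ and $\mathbf{\Sigma}_t$, makes rigorous what the paper leaves implicit when it refers to ``the solution'' $\Phi_t$.
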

\subsection{Stability}

To establish the stability of the generated distributions, we first formally introduce the structural stability and Lipschitz continuity constants characterizing the drift vector field components in \eqref{eq: generative-drift}. As stated in Section \ref{sec: stability-GNN}, graph filters and GNNs are fundamentally stable operators. For a perturbation $\|\mathbf{E}^\star\|_2 \leq \varepsilon$ and its corresponding optimal permutation $\mathbf{P}_0$ (cf. \eqref{eq: perturbation-model}), we define the structural stability constants $\Gamma_H$ and $\Gamma_u$ such that the filter and GNN in \eqref{eq: generative-drift} satisfy
\begin{equation*}
\begin{aligned}
    \|\mathbf{P}_0^\top \mathbf{H}_t(\tilde{\mathbf{L}})\mathbf{x}-\mathbf{H}_t(\mathbf{L})\mathbf{P}_0^\top \mathbf{x}\| &\leq\big(\Gamma_H\varepsilon+\mathcal{O}(\varepsilon^2)\big)\|\mathbf{x}\|, \\
    \|\mathbf{P}_0^\top u_t^{\boldsymbol{\theta}}(\mathbf{x};\tilde{\mathbf{L}})-u_t^{\boldsymbol{\theta}}(\mathbf{P}_0^\top \mathbf{x};\mathbf{L})\| &\leq\big(\Gamma_u\varepsilon+\mathcal{O}(\varepsilon^2)\big)\|g(\mathbf{x}, t)\|,
\end{aligned}
\end{equation*}
for all $\mathbf{x}\in\mathbb{R}^N$ and $t \in [0,1]$. Furthermore, graph filters and GNNs are Lipschitz continuous~\cite[Prop. 3]{schmidt_2026}. We denote their respective one-sided Lipschitz constants as $m_H$ and $m_u$, satisfying
\begin{equation*}
\begin{aligned}
    \langle \mathbf{H}_t(\mathbf{L})(\mathbf{x}-\mathbf{y}),\mathbf{x}-\mathbf{y}\rangle &\leq m_H\|\mathbf{x}-\mathbf{y}\|^2, \\
    \langle u_t^{\boldsymbol{\theta}}(\mathbf{x};\mathbf{L})-u_t^{\boldsymbol{\theta}}(\mathbf{y};\mathbf{L}),\mathbf{x}-\mathbf{y}\rangle &\leq m_u\|\mathbf{x}-\mathbf{y}\|^2,
\end{aligned}
\end{equation*}
for all $\mathbf{x},\mathbf{y}\in\mathbb{R}^N$ and $t \in [0,1]$. Combining these properties yields the stability and continuity of the drift vector field $b_t(\mathbf{x};\mathbf{L})$.

\begin{lemma}
\label{lem: comp-stability}
The vector field $b_t(\mathbf{x};\mathbf{L})$ in \eqref{eq: generative-drift} is structurally stable, i.e., for the optimal permutation $\mathbf{P}_0$ defining the perturbation one has
\begin{equation*}
\begin{aligned}
    \|\mathbf{P}_0^\top b_t(\mathbf{x};\tilde{\mathbf{L}})-b_t(\mathbf{P}_0^\top \mathbf{x};\mathbf{L})\| \leq \varepsilon \big(\Gamma_H \|\mathbf{x}\| + \Gamma_u \|g(\mathbf{x}, t)\|\big) \\
    + \mathcal{O}(\varepsilon^2)(\|\mathbf{x}\| + \|g(\mathbf{x}, t)\|).
\end{aligned}
\end{equation*}
\end{lemma}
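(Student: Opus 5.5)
The plan is to expand the drift \eqref{eq: generative-drift} into its three components, separate the perturbed-minus-nominal difference term by term, and bound each term with the triangle inequality. Using the definition of $b_t$ at the perturbed GSO $\tilde{\mathbf{L}}$ and at the nominal GSO $\mathbf{L}$, I would write
\begin{equation*}
\begin{aligned}
\mathbf{P}_0^\top b_t(\mathbf{x};\tilde{\mathbf{L}})-b_t(\mathbf{P}_0^\top\mathbf{x};\mathbf{L})
&= \big(\mathbf{P}_0^\top\mathbf{H}_t(\tilde{\mathbf{L}})\mathbf{x}-\mathbf{H}_t(\mathbf{L})\mathbf{P}_0^\top\mathbf{x}\big)
+\big(\mathbf{P}_0^\top\boldsymbol{\alpha}_t-\boldsymbol{\alpha}_t\big)\\
&\quad+\big(\mathbf{P}_0^\top u_t^{\boldsymbol{\theta}}(\mathbf{x};\tilde{\mathbf{L}})-u_t^{\boldsymbol{\theta}}(\mathbf{P}_0^\top\mathbf{x};\mathbf{L})\big).
\end{aligned}
\end{equation*}
The key observation is that the bias term vanishes exactly. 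The paper assumes $\mathbf{P}^\top\boldsymbol{\alpha}_t=\boldsymbol{\alpha}_t$ for every $\mathbf{P}\in\mathcal{P}$, and this holds in particular for $\mathbf{P}=\mathbf{P}_0$. Hence $\boldsymbol{\alpha}_t$ contributes nothing to the error, regardless of the perturbation.

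Next I apply the triangle inequality to the two remaining differences and invoke the structural stability constants just defined. The filter difference is bounded by $(\Gamma_H\varepsilon+\mathcal{O}(\varepsilon^2))\|\mathbf{x}\|$. The GNN difference is bounded by $(\Gamma_u\varepsilon+\mathcal{O}(\varepsilon^2))\|g(\mathbf{x},t)\|$. Summing the two bounds gives $\varepsilon(\Gamma_H\|\mathbf{x}\|+\Gamma_u\|g(\mathbf{x},t)\|)$ plus a remainder of the form $c_1(\varepsilon)\|\mathbf{x}\|+c_2(\varepsilon)\|g(\mathbf{x},t)\|$ with $c_1,c_2=\mathcal{O}(\varepsilon^2)$. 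Setting $c=\max(c_1,c_2)=\mathcal{O}(\varepsilon^2)$, the remainder is dominated by $c\,(\|\mathbf{x}\|+\|g(\mathbf{x},t)\|)$, which is the stated form.

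The only point that needs care is making sure the two higher-order terms can be merged. This requires that the $\mathcal{O}(\varepsilon^2)$ terms in the definitions of $\Gamma_H$ and $\Gamma_u$ are uniform in $\mathbf{x}$ and $t\in[0,1]$. I would read the definitions in exactly that sense, since they are stated for all $\mathbf{x}$ and $t$ with constants independent of both. Under that reading, taking the maximum of the two remainders closes the argument and no further estimates are needed. The one-sided Lipschitz constants $m_H,m_u$ play no role here; they enter only in the later Wasserstein bound.
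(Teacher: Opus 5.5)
Your proof is correct and follows essentially the route the paper intends; the paper omits this proof, saying only that combining the filter and GNN stability properties yields the result. You expand $b_t$, cancel the bias via $\mathbf{P}_0^\top\boldsymbol{\alpha}_t=\boldsymbol{\alpha}_t$, and apply the triangle inequality with the defining bounds for $\Gamma_H$ and $\Gamma_u$; your remark that the $\mathcal{O}(\varepsilon^2)$ remainders must be uniform in $\mathbf{x}$ and $t$ makes explicit an assumption the paper leaves implicit.
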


\begin{lemma}
\label{lem: comp-lipschitz}
The vector field $b_t(\mathbf{x};\mathbf{L})$ is one-sided Lipschitz with combined constant $m_b=m_H+m_u$, i.e.,
\begin{equation*}
    \langle b_t(\mathbf{x};\mathbf{L})-b_t(\mathbf{y};\mathbf{L}),\mathbf{x}-\mathbf{y}\rangle\leq m_b\|\mathbf{x}-\mathbf{y}\|^2.
\end{equation*}
\end{lemma}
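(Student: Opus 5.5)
The plan is to expand the inner product using the additive structure of the drift in \eqref{eq: generative-drift} and bound each piece with the one-sided Lipschitz constants defined just above. Fix $t\in[0,1]$ and $\mathbf{x},\mathbf{y}\in\mathbb{R}^N$. By definition,
\begin{equation*}
b_t(\mathbf{x};\mathbf{L})-b_t(\mathbf{y};\mathbf{L}) = \mathbf{H}_t(\mathbf{L})(\mathbf{x}-\mathbf{y}) + \big(u_t^{\boldsymbol{\theta}}(\mathbf{x};\mathbf{L})-u_t^{\boldsymbol{\theta}}(\mathbf{y};\mathbf{L})\big).
\end{equation*}
The bias $\boldsymbol{\alpha}_t$ does not depend on the state, so it cancels in the difference. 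The reference term is affine, so its difference reduces to the linear filter applied to $\mathbf{x}-\mathbf{y}$.

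Next I would take the inner product with $\mathbf{x}-\mathbf{y}$ and use bilinearity to split it into two terms. The first is $\langle \mathbf{H}_t(\mathbf{L})(\mathbf{x}-\mathbf{y}),\mathbf{x}-\mathbf{y}\rangle \le m_H\|\mathbf{x}-\mathbf{y}\|^2$. The second is $\langle u_t^{\boldsymbol{\theta}}(\mathbf{x};\mathbf{L})-u_t^{\boldsymbol{\theta}}(\mathbf{y};\mathbf{L}),\mathbf{x}-\mathbf{y}\rangle \le m_u\|\mathbf{x}-\mathbf{y}\|^2$. Adding the two bounds gives the claim with $m_b=m_H+m_u$.

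There is no real obstacle, since both ingredients are assumed in the definitions preceding the lemma. The only point to note is why $m_H$ and $m_u$ exist and are finite. For the filter, $\mathbf{H}_t(\mathbf{L})$ is symmetric because $\mathbf{L}$ is, so one can take $m_H=\sup_{t}\max_i \sum_k h_k(t)\lambda_i^k$. This supremum is finite if the coefficients $h_k$ are bounded on $[0,1]$. For the GNN, the Lipschitz continuity from \cite[Prop. 3]{schmidt_2026}, composed with the continuous conditioning map $g$, gives a Lipschitz constant $L_u$ via Cauchy--Schwarz, so one can take $m_u\le L_u$. I would mention these facts briefly, noting that $g$ must be Lipschitz in $\mathbf{x}$ uniformly in $t$ for this step to hold.
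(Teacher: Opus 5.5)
Your proof is correct and is essentially the paper's argument. The paper gives no separate proof, only "combining these properties": the bias $\boldsymbol{\alpha}_t$ cancels, the inner product splits by bilinearity, and the two assumed one-sided bounds add to give $m_b=m_H+m_u$. Your side remarks on why $m_H$ and $m_u$ are finite go beyond what the paper needs, since it simply assumes these constants exist. The caveat that $g$ must be Lipschitz in $\mathbf{x}$ uniformly in $t$, rather than merely continuous, is a fair observation.
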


Equipped with these properties, we can establish the following key stability result for the distributions generated by the SDE \eqref{eq: generative-SDE}.

\begin{theorem}
\label{thm: wasserstein-stability}
For any initial distribution $\mathbf{x}_0\sim p_0$, assume that the components of the drift $b_t(\mathbf{x};\mathbf{L})$ in \eqref{eq: generative-drift} are permutation equivariant, and that $b_t$ satisfies the structural stability and one-sided Lipschitz properties stated in Lemmata \ref{lem: comp-stability} and \ref{lem: comp-lipschitz}. Then, the Wasserstein-2 ($W_2$) distance between the probability laws generated by the nominal and perturbed dynamics (cf. \eqref{eq: dynamics}) satisfy
\begin{equation}\label{eq:W2_stability_bound}
\min_{\mathbf{P}\in\mathcal{P}}
W_2\!\left(\tilde{\pi}_t^{\mathbf{P}},\pi_t^{\mathbf{P}}\right)
\leq
\Omega_t \big( \Gamma_H C_{H, t} + \Gamma_u C_{u, t} \big) \varepsilon
+ \mathcal{O}(\varepsilon^2),
\end{equation}
for all $t\in[0,1]$, where $\Omega_t \coloneqq \int_0^t e^{m_b(t-s)}ds$ and 
\begin{equation*}
\begin{aligned}
C_{H, t} &\coloneqq \max_{\mathbf{P} \in \mathcal{P}} \sup_{s\in[0,t]} \sqrt{\mathbb{E}\left[ \left\|\Phi_s(\mathbf{P}^\top\mathbf{x}_0;\mathbf{L})\right\|^2 \right]}, \\
C_{u, t} &\coloneqq \max_{\mathbf{P} \in \mathcal{P}} \sup_{s\in[0,t]} \sqrt{\mathbb{E}\left[ \left\|g\big(\Phi_s(\mathbf{P}^\top\mathbf{x}_0;\mathbf{L}), s\big)\right\|^2 \right]}.
\end{aligned}
\end{equation*}
\end{theorem}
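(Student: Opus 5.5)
The plan is a synchronous coupling argument combined with a Grönwall-type estimate that exploits the one-sided Lipschitz property. First I reduce the minimum over permutations to a single choice, $\mathbf{P}=\mathbf{P}_0$, the optimal permutation in \eqref{eq: perturbation-model}. Then $\min_{\mathbf{P}} W_2(\tilde\pi_t^{\mathbf{P}},\pi_t^{\mathbf{P}})\le W_2(\tilde\pi_t^{\mathbf{P}_0},\pi_t^{\mathbf{P}_0})$. By Proposition~\ref{prop: dynamics-equivariance}, $\mathbf{P}_0^\top\Phi_t(\mathbf{x}_0;\tilde{\mathbf{L}})\overset{d}{=}\Phi_t(\mathbf{P}_0^\top\mathbf{x}_0;\mathbf{P}_0^\top\tilde{\mathbf{L}}\mathbf{P}_0)$.

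To build a coupling, I drive both processes with the same Brownian motion and the same initial sample. Let $\mathbf{x}_t=\Phi_t(\mathbf{P}_0^\top\mathbf{x}_0;\mathbf{L})$. Let $\tilde{\mathbf{x}}_t=\mathbf{P}_0^\top\Phi_t(\mathbf{x}_0;\tilde{\mathbf{L}})$, which solves
\[
d\tilde{\mathbf{x}}_t=\mathbf{P}_0^\top b_t(\mathbf{P}_0\tilde{\mathbf{x}}_t;\tilde{\mathbf{L}})\,dt+\mathbf{P}_0^\top\mathbf{\Sigma}_t\,d\mathbf{w}_t .
\]
By the invariance assumptions on $\mathbf{\Sigma}_t$ and the fact that $\mathbf{P}_0^\top\mathbf{w}_t$ is again a Brownian motion, I can realize it with the same noise $\mathbf{\Sigma}_t\,d\mathbf{w}'_t$ as $\mathbf{x}_t$, using $\mathbf{w}'_t=\mathbf{P}_0^\top\mathbf{w}_t$ in both equations. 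The difference $\mathbf{e}_t=\tilde{\mathbf{x}}_t-\mathbf{x}_t$ then has no martingale part and solves the random ODE
\[
\dot{\mathbf{e}}_t=\big[\mathbf{P}_0^\top b_t(\mathbf{P}_0\tilde{\mathbf{x}}_t;\tilde{\mathbf{L}})-b_t(\tilde{\mathbf{x}}_t;\mathbf{L})\big]+\big[b_t(\tilde{\mathbf{x}}_t;\mathbf{L})-b_t(\mathbf{x}_t;\mathbf{L})\big],
\]
with $\mathbf{e}_0=0$.

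Next I estimate $\frac{d}{dt}\|\mathbf{e}_t\|$, or equivalently $\frac12\frac{d}{dt}\|\mathbf{e}_t\|^2$. Lemma~\ref{lem: comp-lipschitz} bounds the second bracket's contribution by $m_b\|\mathbf{e}_t\|^2$. Lemma~\ref{lem: comp-stability}, applied at the point $\mathbf{P}_0\tilde{\mathbf{x}}_t$, bounds the first bracket by $\varepsilon(\Gamma_H\|\tilde{\mathbf{x}}_t\|+\Gamma_u\|g(\tilde{\mathbf{x}}_t,t)\|)+\mathcal{O}(\varepsilon^2)$; here I use $\|\mathbf{P}_0\cdot\|=\|\cdot\|$ and the equivariance of $g$. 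Cauchy–Schwarz then gives $\frac{d}{dt}\|\mathbf{e}_t\|\le m_b\|\mathbf{e}_t\|+\delta_t$, where $\delta_t$ is that stability term. Grönwall yields
\[
\|\mathbf{e}_t\|\le\int_0^t e^{m_b(t-s)}\delta_s\,ds .
\]
Taking $L^2(\Omega)$ norms and applying Minkowski's integral inequality, I obtain $W_2\le\sqrt{\mathbb{E}\|\mathbf{e}_t\|^2}\le\int_0^t e^{m_b(t-s)}\sqrt{\mathbb{E}\delta_s^2}\,ds$.

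The main obstacle is that $\delta_s$ involves the \emph{perturbed} trajectory $\tilde{\mathbf{x}}_s$, whereas $C_{H,t}$ and $C_{u,t}$ are moments of the \emph{nominal} trajectory. I would swap them at the cost of higher order: $\|\tilde{\mathbf{x}}_s\|\le\|\mathbf{x}_s\|+\|\mathbf{e}_s\|$, and the Grönwall bound itself shows $\sqrt{\mathbb{E}\|\mathbf{e}_s\|^2}=\mathcal{O}(\varepsilon)$ provided the perturbed second moments are finite (standard for Lipschitz drifts). For $g$, I would use its continuity, assuming it is Lipschitz in $\mathbf{x}$ as in \cite{schmidt_2026}, to get $\sqrt{\mathbb{E}\|g(\tilde{\mathbf{x}}_s,s)\|^2}\le\sqrt{\mathbb{E}\|g(\mathbf{x}_s,s)\|^2}+\mathcal{O}(\varepsilon)$. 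Multiplying by $\varepsilon$, the correction terms become $\mathcal{O}(\varepsilon^2)$.

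Finally, the nominal moments are bounded by the $\sup_{s\le t}$ and by the maximum over $\mathbf{P}$ (the relevant one being $\mathbf{P}_0$), i.e., by $C_{H,t}$ and $C_{u,t}$. These pull out of the integral and leave $\Omega_t(\Gamma_H C_{H,t}+\Gamma_u C_{u,t})\varepsilon+\mathcal{O}(\varepsilon^2)$, which is \eqref{eq:W2_stability_bound}.
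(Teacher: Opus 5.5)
Your overall scheme matches the paper's: a synchronous coupling through a permuted Brownian motion, cancellation of the noise via $\mathbf{P}_0^\top\mathbf{\Sigma}_t\mathbf{P}_0=\mathbf{\Sigma}_t$, Gr\"onwall, Minkowski, and then bounding by $C_{H,t},C_{u,t}$. The difference is where you split the drift difference, and that changes which hypotheses each argument needs.

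You add and subtract $b_t(\tilde{\mathbf{x}}_t;\mathbf{L})$. Your one-sided Lipschitz step is then Lemma~\ref{lem: comp-lipschitz} exactly as stated for the nominal $\mathbf{L}$, and the stability defect of Lemma~\ref{lem: comp-stability} falls on the perturbed trajectory.

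The paper instead adds and subtracts $\mathbf{P}_0^\top b_t(\mathbf{P}_0\mathbf{x}_t;\tilde{\mathbf{L}})$. The defect then sits on the nominal trajectory $\mathbf{x}_t$, using $\|\mathbf{P}_0\mathbf{x}_t\|=\|\mathbf{x}_t\|$ and equivariance of $g$. So $C_{H,t}$ and $C_{u,t}$ appear directly after Minkowski, and your ``main obstacle'' never arises. The price is that the other term needs the one-sided bound for the perturbed drift $b_t(\cdot;\tilde{\mathbf{L}})$ with the same $m_b$. The paper takes this for granted, although Lemma~\ref{lem: comp-lipschitz} is only stated for $\mathbf{L}$.

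Your route trades that implicit hypothesis for two others:
\begin{enumerate}
\item Lipschitz control of $g$. The paper only assumes continuity, which would not give an $\mathcal{O}(\varepsilon^2)$ remainder after the swap.
\item Second moments of $\tilde{\mathbf{x}}_s$ bounded uniformly in small $\varepsilon$. Finiteness alone does not justify your claim $\sqrt{\mathbb{E}\|\mathbf{e}_s\|^2}=\mathcal{O}(\varepsilon)$.
\end{enumerate}

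You can drop the second requirement, and the mild circularity, as follows. Let $L_g$ be the Lipschitz constant of $g$, and insert $\|\tilde{\mathbf{x}}_s\|\le\|\mathbf{x}_s\|+\|\mathbf{e}_s\|$ and $\|g(\tilde{\mathbf{x}}_s,s)\|\le\|g(\mathbf{x}_s,s)\|+L_g\|\mathbf{e}_s\|$ directly into the differential inequality. The resulting $\mathcal{O}(\varepsilon)\|\mathbf{e}_t\|$ terms can be absorbed into the Gr\"onwall rate. Since $e^{(m_b+\mathcal{O}(\varepsilon))(t-s)}=e^{m_b(t-s)}(1+\mathcal{O}(\varepsilon))$, the leading term $\Omega_t(\Gamma_H C_{H,t}+\Gamma_u C_{u,t})\varepsilon$ is unchanged.
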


\begin{proof}[Proof sketch]
Recall $\mathbf{P}_0 \in \mathcal{P}$  in \eqref{eq: perturbation-model}, let $\mathbf{x}_t \coloneqq \Phi_t(\mathbf{P}_0^\top \mathbf{x}_0; \mathbf{L})$ and $\tilde{\mathbf{x}}_t \coloneqq \Phi_t(\mathbf{x}_0; \tilde{\mathbf{L}})$, and define $\mathbf{z}_t \coloneqq \mathbf{P}_0^\top \tilde{\mathbf{x}}_t - \mathbf{x}_t$.
Consider a synchronous coupling sharing the initial state $\mathbf{x}_0$ and driving the perturbed trajectory with the standard Brownian motion $\tilde{\mathbf{w}}_t \coloneqq \mathbf{P}_0\mathbf{w}_t$. By the equivariance $\mathbf{P}_0^\top\mathbf{\Sigma}_t\mathbf{P}_0 = \mathbf{\Sigma}_t$, the stochastic noise cancels, yielding the ordinary differential equation (ODE)
\begin{equation*}
d\mathbf{z}_t = \big[ \mathbf{P}_0^\top b_t(\tilde{\mathbf{x}}_t; \tilde{\mathbf{L}}) - b_t(\mathbf{x}_t; \mathbf{L}) \big] dt.
\end{equation*}
Adding and subtracting $\mathbf{P}_0^\top b_t(\mathbf{P}_0\mathbf{x}_t; \tilde{\mathbf{L}})$, we have
\begin{equation*}
\begin{aligned}
\frac{1}{2}\frac{d}{dt}\|\mathbf{z}_t\|^2 &= \langle \mathbf{z}_t, \mathbf{P}_0^\top \big( b_t(\tilde{\mathbf{x}}_t; \tilde{\mathbf{L}}) - b_t(\mathbf{P}_0\mathbf{x}_t; \tilde{\mathbf{L}}) \big) \rangle \\
& \quad + \langle \mathbf{z}_t, \mathbf{P}_0^\top b_t(\mathbf{P}_0\mathbf{x}_t; \tilde{\mathbf{L}}) - b_t(\mathbf{x}_t; \mathbf{L}) \rangle.
\end{aligned}
\end{equation*}
Applying Lemma \ref{lem: comp-lipschitz} to the first term, and Cauchy-Schwarz with Lemma \ref{lem: comp-stability} to the second, yields
\begin{equation*}
\begin{aligned}
\frac{1}{2}\frac{d}{dt}\|\mathbf{z}_t\|^2 \leq m_b \|\mathbf{z}_t\|^2 + \Big( \varepsilon \big(\Gamma_H \|\mathbf{x}_t\| + \Gamma_u \|g(\mathbf{x}_t, t)\|\big) \\
+ \mathcal{O}(\varepsilon^2)\big(\|\mathbf{x}_t\| + \|g(\mathbf{x}_t, t)\|\big) \Big) \|\mathbf{z}_t\| .
\end{aligned}
\end{equation*}
Dividing by $\|\mathbf{z}_t\|$ and applying Gr\"onwall's inequality ($\mathbf{z}_0 = \mathbf{0}$) gives
\begin{equation*}
\begin{aligned}
\|\mathbf{z}_t\| \leq \int_0^t e^{m_b(t-s)} \Big( \varepsilon \big(\Gamma_H \|\mathbf{x}_s\| + \Gamma_u \|g(\mathbf{x}_s, s)\|\big) \\
+ \mathcal{O}(\varepsilon^2)\big(\|\mathbf{x}_s\| + \|g(\mathbf{x}_s, s)\|\big) \Big) ds. 
\end{aligned}
\end{equation*}
The 2-Wasserstein distance is bounded by this coupling, namely
$W_2(\tilde{\pi}_t^{\mathbf{P}_0}, \pi_t^{\mathbf{P}_0}) \leq \sqrt{\mathbb{E}\big[\|\mathbf{z}_t\|^2\big]}$. Applying Minkowski's integral inequality and the $L^2$ triangle inequality, yields
\begin{equation*}
\begin{aligned}
& \sqrt{\mathbb{E}\big[\|\mathbf{z}_t\|^2\big]}
\leq 
\\
& 
\varepsilon
\int_0^t e^{m_b(t-s)}
\Big(
\Gamma_H \sqrt{\mathbb{E}\big[\|\mathbf{x}_s\|^2\big]}
+
\Gamma_u \sqrt{\mathbb{E}\big[\|g(\mathbf{x}_s,s)\|^2\big]}
\Big)ds
\\
&
+
\mathcal{O}(\varepsilon^2)
\int_0^t e^{m_b(t-s)}
\Big(
\sqrt{\mathbb{E}\big[\|\mathbf{x}_s\|^2\big]}
+
\sqrt{\mathbb{E}\big[\|g(\mathbf{x}_s,s)\|^2\big]}
\Big)ds .
\end{aligned}
\end{equation*}
Using $C_{H,t} \geq \sqrt{\mathbb{E}[\|\mathbf{x}_s\|^2]}$ and $C_{u,t} \geq \sqrt{\mathbb{E}[\|g(\mathbf{x}_s,s)\|^2]}$ yields the bound in \eqref{eq:W2_stability_bound}, completing the proof sketch.
%
%
\end{proof}

We find stability is governed by three terms: (i) $\Gamma_H$ and $\Gamma_u$, which capture the inherent stability of the graph filter and GNN; (ii) $\Omega_t$, which is exponential in the combined Lipschitz constant and dictates how the continuous dynamics amplify errors; and (iii) $C_{H, t}$ and $C_{u, t}$, the trajectory suprema across all initial condition permutations.

\begin{figure*}[!t]
\centering
\includegraphics[width=0.45\textwidth]{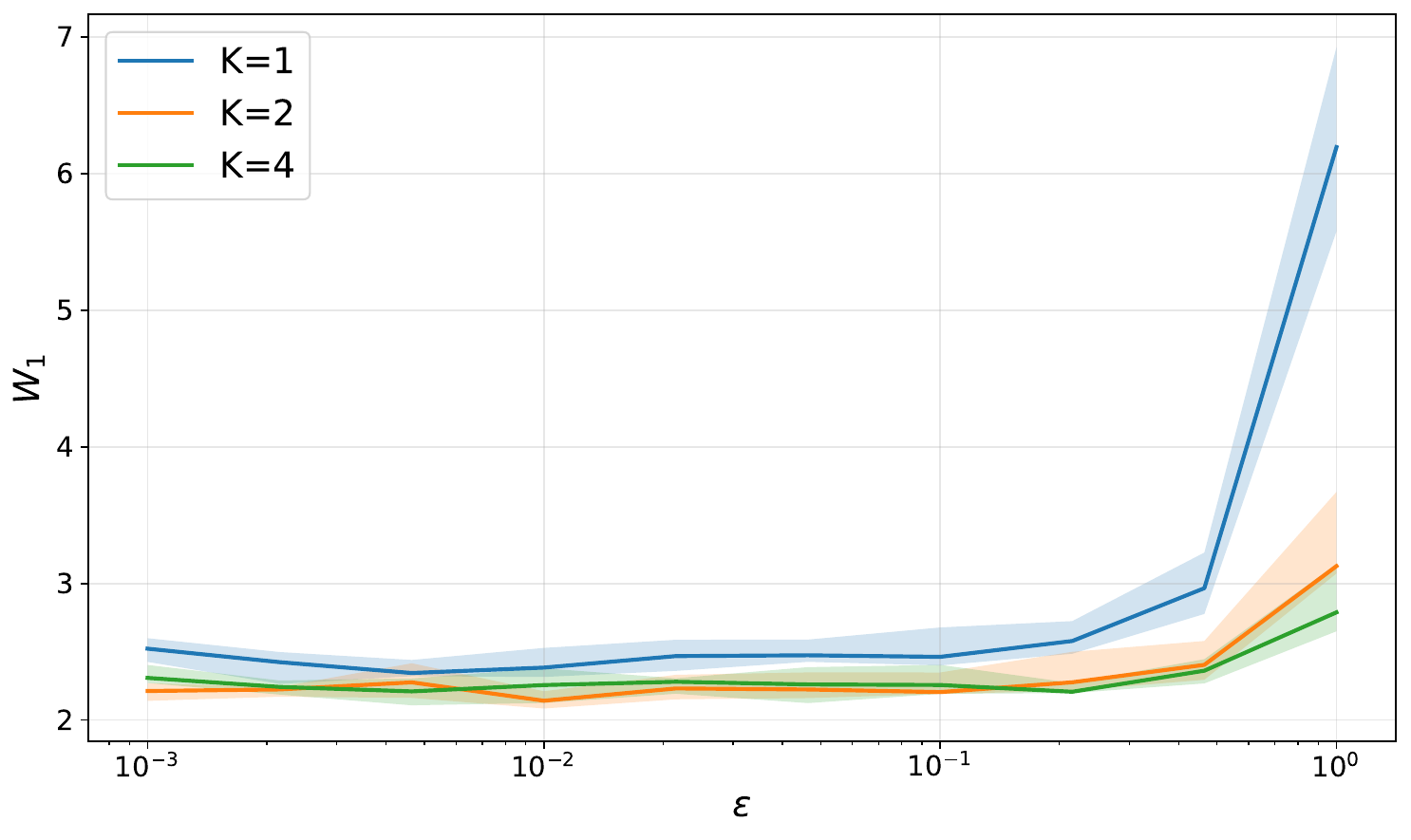}
\hfill
\includegraphics[width=0.45\textwidth]{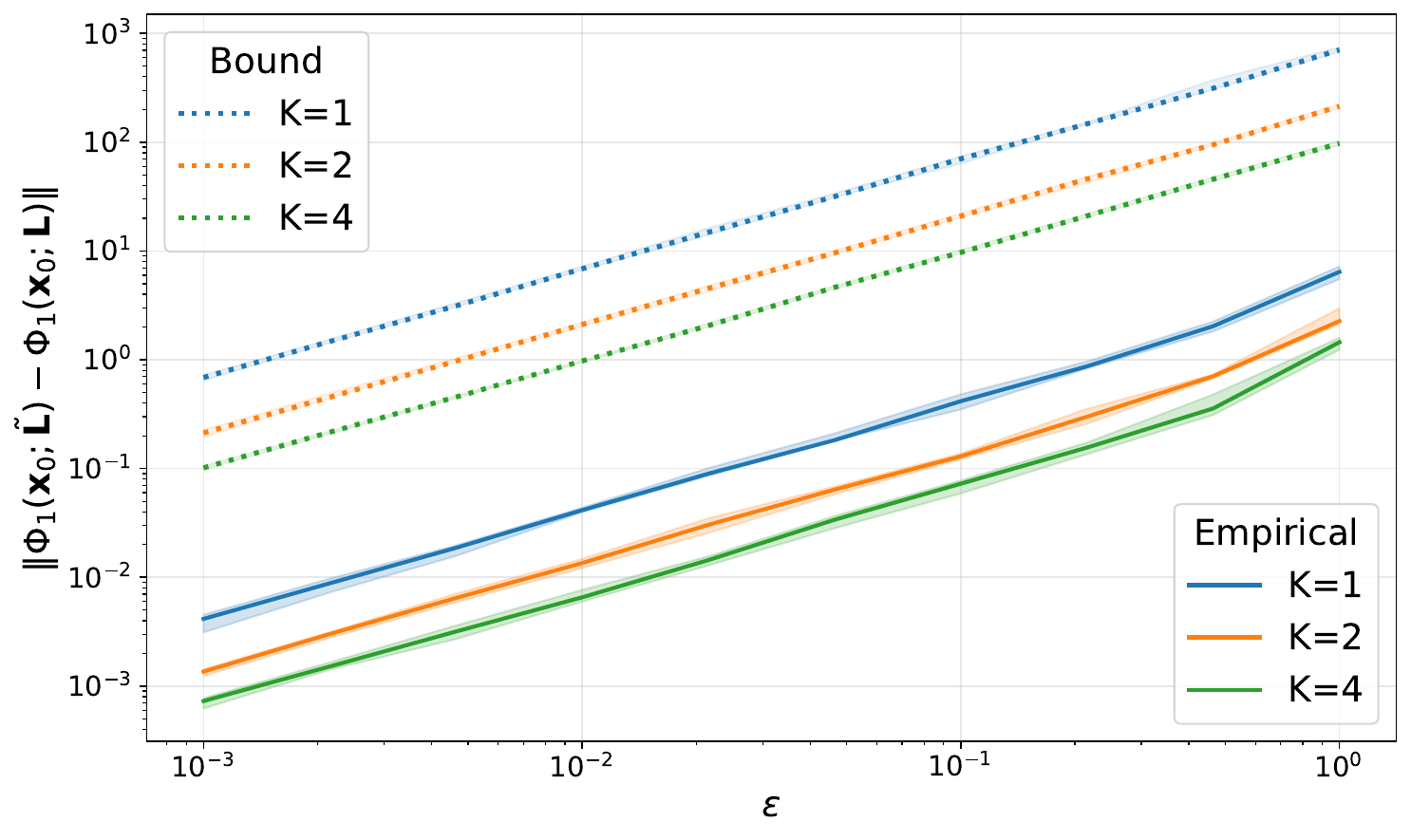}

\vspace{3pt} 

\includegraphics[width=0.45\textwidth]{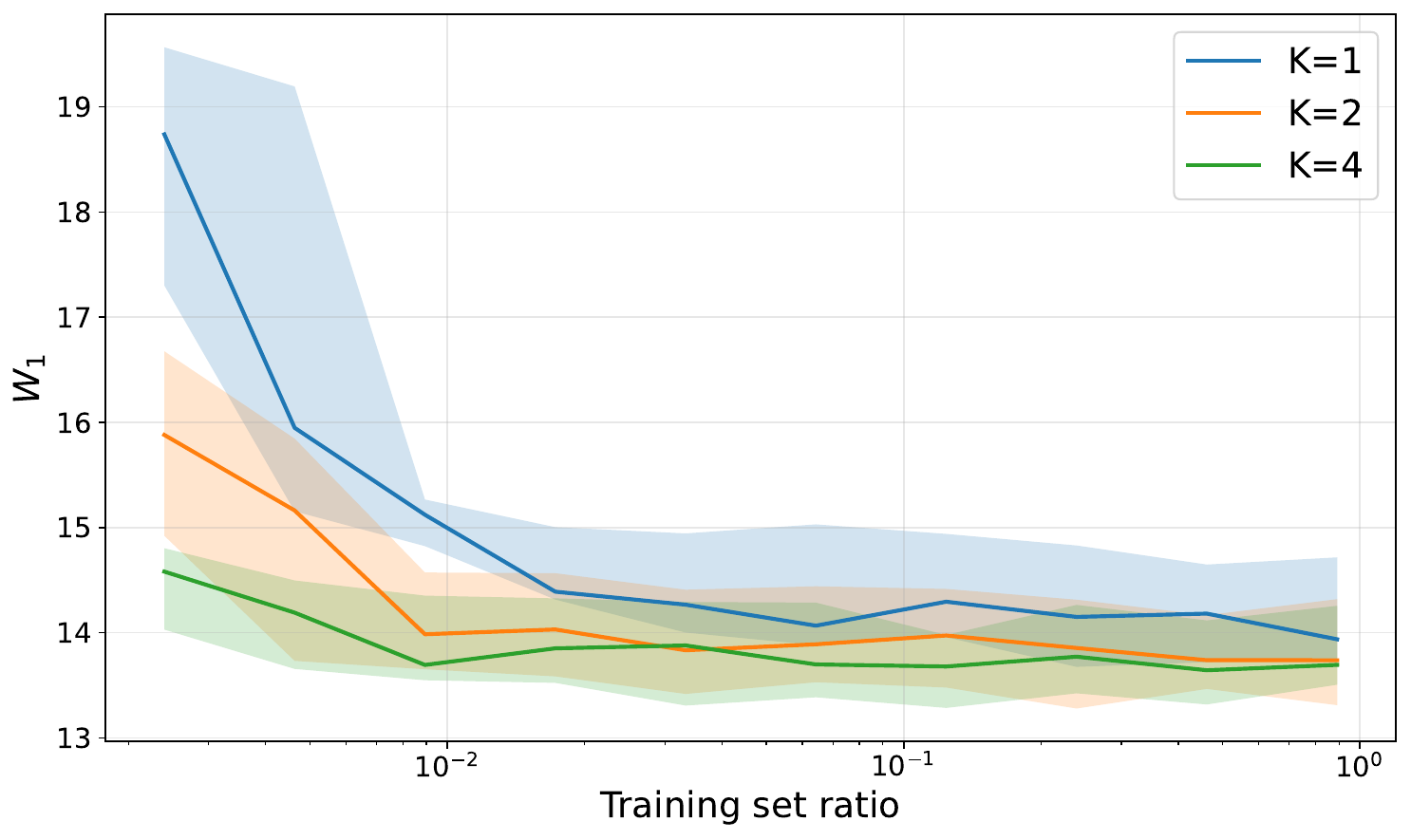}
\hfill
\includegraphics[width=0.45\textwidth]{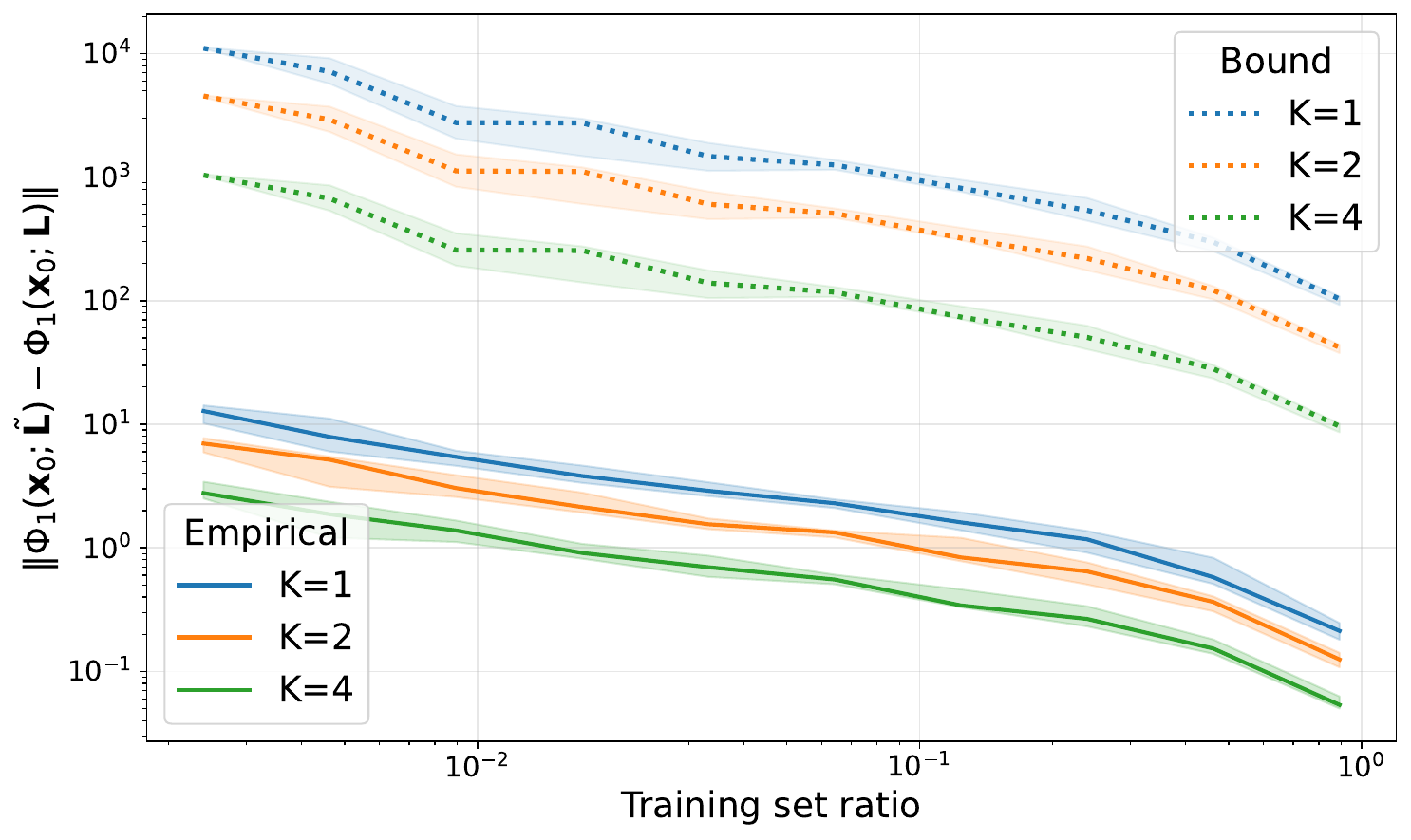}

\vspace{-5pt} 
\caption{Performance and stability under structural graph perturbations. Top: SBM graph under controlled synthetic relative perturbations ($\varepsilon$). Bottom: fMRI signals under data-driven graph estimation errors, varying the fraction of training samples used. Generative quality (left) shows higher-order optimized filters ($K=2, 4$) achieve competitive or better $W_1$ distances compared to the $K=1$ baseline (heat equation). Empirical stability (right) demonstrates that increasing the optimized filter order yields significantly lower output variation under perturbations. Lines and shaded regions represent medians and 25th–75th percentiles across 10 independent runs.}
\label{fig:all_results}
\vspace{-5pt} 
\end{figure*}

\section{Stable Graph Filters}\label{sec:stable_filters}

Theorem \ref{thm: wasserstein-stability} offers actionable insights to promote robustness in the generative pipeline. While the GNN constants $\{\Gamma_u, m_u\}$ can be regularized during training~\cite[Sec. V]{schmidt_2026}, the filter constants $\{\Gamma_H, m_H\}$ are determined by the choice of reference dynamics. Instead of relying on the standard heat equation whereby $\mathbf{H}(\mathbf{L}) = -\kappa \mathbf{L}$ in \eqref{eq: generative-drift}, here we formulate a principled filter design problem that enforces similar signal smoothness while improving its structural stability.

We henceforth assume that $\mathbf{L}$ is a normalized graph Laplacian with spectrum in $[0,\lambda_{\max}]$ and restrict our attention to time-independent reference filters by setting $h_k(t)\equiv h_k$ for $k=0,\ldots,K$, so that $\mathbf{H}(\mathbf{L})\coloneqq\sum_{k=0}^{K}h_k\mathbf{L}^k$. Nulling the bias term, $h_0=0$, we parameterize the corresponding frequency response as $h(\lambda)\coloneqq\sum_{k=1}^{K}h_k\lambda^k$, with $\mathbf{h}\coloneqq[h_1,\ldots,h_K]^\top\in\mathbb{R}^K$. Further imposing no signal amplification, $h(\lambda)\leq 0$, yields a one-sided Lipschitz constant $m_H=0$. Thus, enhancing robustness reduces to minimizing the structural stability constant $\Gamma_H$. Since $\Gamma_H$ is proportional to the integral Lipschitz constant of the filter \cite{Gama_2020}, our design objective is to select $\mathbf{h}$ so as to minimize $\max_{\lambda}|\lambda h'(\lambda)|$.

To also guarantee the filter generates smooth signals, we bound the Dirichlet energy $\mathcal{E}(\mathbf{x}) = \mathbf{x}^\top \mathbf{L} \mathbf{x}$ of its output. Under the linear dynamics $d\mathbf{x}_t = \mathbf{H}(\mathbf{L})\mathbf{x}_t dt$, the spectral response at $t=1$ and frequency $\lambda_i$ is $\hat{x}_{1,i} = e^{h(\lambda_i)}\hat{x}_{0,i}$. For a normalized initial signal $\|\mathbf{x}_0\|\leq 1$, imposing a prescribed smoothness bound $\eta > 0$ yields
\begin{equation*}
    \mathcal{E}(\mathbf{x}_1) \leq \max_{\lambda \in [0, \lambda_{\text{max}}]} \lambda e^{2h(\lambda)} \leq \eta.
\end{equation*}
Taking the logarithm directly simplifies this smoothness constraint to $h(\lambda) \leq \frac{1}{2}\ln(\frac{\eta}{\lambda})$, for all $\lambda > 0$. 

Introducing an auxiliary variable $\rho \geq 0$, we formulate the min-max filter design problem in epigraph form as
\begin{equation}
\label{eq: filter-silp}
\begin{aligned}
    & \min_{\mathbf{h}, \rho} \rho \\
    \text{s.t.} \quad
    & -\rho \leq \lambda h'(\lambda) \leq \rho, \qquad &&\forall \lambda \in (0, \lambda_{\text{max}}],\\
    & h(\lambda) \leq \frac{1}{2} \ln\left(\frac{\eta}{\lambda}\right), \qquad &&\forall \lambda \in (0, \lambda_{\text{max}}], \\
    & h(\lambda) \leq 0, \qquad &&\forall \lambda \in [0, \lambda_{\text{max}}].
\end{aligned}
\end{equation}
Since \(h(\lambda)\) and \(h'(\lambda)\) are linear in the polynomial coefficients $\mathbf{h}$, \eqref{eq: filter-silp} is a semi-infinite linear program. In practice, it can be approximated by discretizing \((0,\lambda_{\text{max}}]\) over a dense grid and solving the resulting finite-dimensional linear program \cite[Sec. 7]{hettich1993semi}.

\section{Numerical Experiments}\label{sec:experiments}

Here we test the proposed stable filters on a graph signal generation task using samples from a target data distribution. We consider two test cases: (i) a synthetic setting based on a Stochastic Block Model (SBM) graph \cite{holland1983stochastic}; and (ii) a real-world scenario using fMRI data \cite{HCP}, where functional brain connectivity defines the graph structure. Since our prime objective is to perform an ablation study on the graph filter design, we utilize a fixed architecture and training procedure for the learned vector field $u_t^{\boldsymbol{\theta}}$ across all configurations. For an approach to robustify $u_t^{\boldsymbol{\theta}}$ itself; see~\cite{schmidt_2026}. Code to reproduce the experiments is available at \href{https://github.com/mschmi21/stable_graph_fm}{github.com/mschmi21/stable\_graph\_fm}.

\subsection{Experimental Setup}

\noindent\textbf{Datasets.} In the synthetic setting, we use an SBM graph with two communities of 10 nodes each ($N=20$). Graph signals are drawn from a Gaussian distribution with standard deviation $1$. One community has a mean of $1$, while the other has a mean of $-1$. For the real-data test case, we use a single subject from the HCP dataset, producing data in $\mathbb{R}^{360 \times 1190}$ ($N=360$ brain regions, $1190$ time points). Each time point is treated as an individual graph signal.\vspace{2pt}

\noindent\textbf{Filter Design and Architecture.} To isolate the effect of the reference dynamics, we synthesize stable filters of orders $K \in \{1, 2, 4\}$ by solving the semi-infinite linear program in \eqref{eq: filter-silp}. Notably, for $K=1$, this optimization naturally recovers the heat equation filter, which serves as our baseline. The smoothness constraint parameter $\eta$ is selected via a grid search on the validation set. For the learned vector field $u_t^{\boldsymbol{\theta}}$, we employ the \texttt{GCNPolicy} model introduced in \cite{yang2025topological}. For simplicity, we henceforth set $\boldsymbol{\alpha}_t=\mathbf{0}$ and $\mathbf{\Sigma}_t=\mathbf{0}$ for all $t\in[0,1]$. \vspace{2pt}

\noindent\textbf{Perturbations and Evaluation.} We train all models on the unperturbed, nominal graph, using an independent coupling and filter-dependent optimal paths (cf. \cite{wyrwal2026topological}). Robustness is evaluated at test time. For the SBM graphs, we introduce a controlled synthetic relative perturbation of the form $\tilde{\mathbf{L}} = \mathbf{L} + \frac{1}{2}(\mathbf{E}\mathbf{L} + \mathbf{L}\mathbf{E})$, with $\|\mathbf{E}\|_2 = \varepsilon$. For the fMRI data, perturbations are data-driven: test-time graphs are constructed using empirical correlation matrices estimated from varying data subsets. We report two metrics: (i) the Wasserstein-1 ($W_1$) distance to quantify generative performance; and (ii) the variation $\|\Phi_1(\mathbf{x}_0; \tilde{\mathbf{L}}) - \Phi_1(\mathbf{x}_0; \mathbf{L})\|$ in generated outputs induced by graph perturbations, estimated via an Euler sampler.

\subsection{Results and Discussion}

Fig.~\ref{fig:all_results} depicts the results for the synthetic SBM (top) and fMRI (bottom) experiments. In terms of generative quality measured by the $W_1$ distance, the higher-order stable filters ($K=2, 4$) achieve competitive or slightly better performance compared to the first-order baseline ($K=1$) under low perturbations, while clearly outperforming it under larger perturbations. More importantly, the empirical stability evaluations demonstrate that increasing the filter order offers added robustness to structural errors. By actively minimizing the structural stability constant $\Gamma_H$, the generative trajectories of the higher-order filters exhibit markedly lower output divergence. Finally, while the theoretical stability bounds conservatively overestimate the empirical errors, they remain highly informative and align with the relative differences observed in practice.

\section{Conclusion}\label{sec:conclusion}
In this work, we established explicit stability bounds for a general class of graph-aware continuous-time generative dynamics (cf. Theorem \ref{thm: wasserstein-stability}). By quantifying the Wasserstein distance between distributions generated under nominal and perturbed graphs, our analysis reveals how the choice of reference dynamics directly impacts the structural robustness of the model. Guided by these theoretical insights, we introduced a principled optimization framework to design stable graph filters. Our empirical evaluations confirm that appropriately selecting this reference filter significantly enhances the stability of the generative model against graph perturbations, without sacrificing the quality of the generated signals.

\section{Compliance with Ethical Standards}

This study uses previously collected, de-identified Human Connectome Project (HCP) data~\cite{HCP}. No new human-subject data were collected, and no additional ethical approval was required.

\bibliographystyle{IEEEtran}
\bibliography{refs}

\end{document}